\title{Tight Paths and Tight Pairs \\ in Weighted Directed Graphs%
%\thanks{}
}  
\author{Jos\'e Luis Balc\'azar \\
% {\vspace{-0.5cm}} \\ 
{Department of Computer Science} \\
{Universitat Polit\`ecnica de Catalunya}}
\newtheorem{theorem}{Theorem}
\newtheorem{lemma}[theorem]{Lemma}
\newtheorem{proposition}[theorem]{Proposition}
\newtheorem{definition}[theorem]{Definition}
\newtheorem{example}[theorem]{Example}
\newenvironment{proof}{\smallskip\noindent{\bf Proof.} \ignorespaces}{\nobreak\hglue0pt\nobreak\hfill\vrule depth 0ex height 1ex width 1ex\smallskip}
\def\st{\bigm|}
\def\g{\gamma}
\def\implies{\Rightarrow}
\begin{document}

\maketitle

\begin{abstract}
We state the graph-theoretic computational problem of 
finding tight paths in a directed, edge-weighted graph, 
as well as its simplification of finding tight pairs. 
These problems are motivated by the need of algorithms 
that find so-called basic antecedents in closure spaces, 
in one specific approach to data analysis.
We discuss and compare several algorithms to approach these 
problems.
\end{abstract}

% \section{Introduction}
% \section{Preliminaries and notation} \label{sec:prelim}

\section{The general problem}

Consider a finite directed graph $G=(V,E,d)$ on 
a set $V$ of vertices, with edges $E\subseteq V\times V$ 
and positive
% \footnote{We will discuss 
% below the case in which zero or 
% negative costs are allowed.} 
costs (or distances) $d:E\to I\!\!R^+$ on the edges.
A \emph{path} in $G$ is a sequence of vertices, $p = (u_1,\ldots,u_k)$, 
with $1 \leq k$ and $(u_i,u_{i+1})\in E$ for $1\leq i < k$. 
The \emph{length} of the path is $k$.
The extreme case of a single node, $k = 1$, is still considered
a (very short) path of length 1. We call $p' = (u_i,\ldots,u_j)$ with
$1 \leq i \leq j \leq k$ a \emph{subpath} of $p$. It is a
\emph{proper subpath} if, besides, $p' \neq p$, that is, either
$1 < i$ or $j < k$ (or both). In general, paths
may not be necessarily simple, that is, repeated passes through
the same vertex are allowed. The cost of a path $p = (u_1,\ldots,u_k)$ 
is the sum $d(p) = \sum_{1\leq i < k} d(u_i,u_{i+1})$ of the costs 
of its edges. Clearly, $d(p) = 0$ if $p$ has length 1.

\begin{definition}
\label{d:tightpath}
Given a graph $G$ and a threshold value $\g\in I\!\!R$ with $\g\geq0$, 
% (again a rational will do)
a \emph{tight path} $(u_1,\ldots,u_k)$, is a path $p$ in $G$ such that
\begin{enumerate}
\item
$d(p) \leq\g$ but
\item
every extension of $p$ into $p'$ by adding a new edge at either end, 
$(u_0,u_1)\in E$ or $(u_k,u_{k+1})\in E$,
leads to $d(p')>\g$. (Of course, this might hold vacuously if
no such extension is possible.)
\end{enumerate}
\end{definition}

\begin{example}
\label{ex:simple}
Consider $G$ on $V = \{ A, B, C, D, E \}$ with edges and
costs $(A,B): 2$, $(B,C): 1$, $(C,E): 1$, $(A,D): 1$ and
$(D,E): 2$ (see Figure~\ref{fig:simple}). For threshold 3, the tight paths are $(A,B,C)$,
$(B,C,E)$ and $(A,D,E)$.
\end{example}

\begin{figure}
    \centering
    \includegraphics[width=0.6\linewidth]{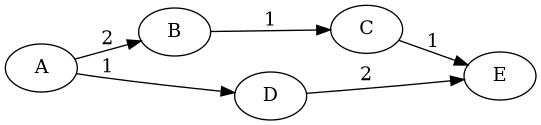}
    \caption{Graph for example \ref{ex:simple}.}
    \label{fig:simple}
\end{figure}

\begin{example}
\label{ex:withcycles}
Consider $G$ on $V = \{ A, B, C, D, E \}$ with edges and
costs $(A,B): 2$, $(B,C): 1$, $(C,A): 1$, $(A,D): 1$,
$(D,E): 2$, and $(E,A): 1$ (see Figure~\ref{fig:withcycles}). 
For threshold 4, the tight paths are 
$(A,B,C,A)$, $(A,D,E,A)$, $(B,C,A,B)$, $(B,C,A,D)$, 
$(C,A,B,C)$, $(C,A,D,E)$, $(D,E,A,D)$, $(E,A,D,E)$, 
and $(E,A,B,C)$. At threshold 5, we find $(A,B,C,A,D)$, 
$(D,E,$ $A,B)$, and others (a total of 9, all of
length 5 except one). For larger thresholds, tight paths
can go through the complete loops twice or more, or also 
alternate between them repeatedly.
\end{example}

\begin{figure}
    \centering
    \includegraphics[width=0.4\linewidth]{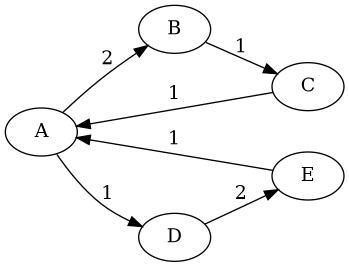}
    \caption{Graph for example \ref{ex:withcycles}.}
    \label{fig:withcycles}
\end{figure}

\begin{proposition}
Every vertex belongs to at least one tight path.
\label{p:vcover}
\end{proposition}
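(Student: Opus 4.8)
The plan is to construct, around any prescribed vertex, a tight path by a greedy extension argument: start from the trivial one-vertex path and lengthen it as long as the threshold permits, relying on positivity of the costs to guarantee that the process stops.

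First I would fix an arbitrary vertex $v\in V$ and consider the path $(v)$ of length $1$. By the conventions established above its cost is $d((v)) = 0$, and since $\g\geq 0$ this gives $d((v))\leq\g$; thus condition~1 of Definition~\ref{d:tightpath} already holds for this starting path. The path also trivially contains $v$.

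Next I would extend greedily: while the current path admits some legal one-edge extension at either end (that is, via an actual edge $(u_0,u_1)\in E$ or $(u_k,u_{k+1})\in E$) whose resulting cost is still $\leq\g$, I perform one such extension. Each extension only prepends or appends a vertex, so $v$ remains in the path throughout, and condition~1 stays true by construction. When the process halts, by definition no extension at either end keeps the cost within $\g$, so every extension yields cost $>\g$; this is exactly condition~2. Hence the path produced at the end is tight and contains $v$, as required. I would also record the degenerate case in which $(v)$ admits no legal extension at all: then condition~2 holds vacuously and $(v)$ is itself the desired tight path.

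The crux of the argument is termination, and this is where the hypotheses must be used carefully. Because paths need not be simple, termination cannot rely on vertices being distinct, so I would instead invoke positivity of the costs together with finiteness of $G$. Let $\delta = \min_{e\in E} d(e)$, which is a strictly positive real since $E$ is finite and $d:E\to I\!\!R^+$. Every extension adds at least $\delta$ to the cost while the running total never exceeds $\g$, so no path arising in the process has more than $\lfloor \g/\delta \rfloor + 1$ vertices. Consequently only finitely many extensions can ever occur, and the greedy procedure must stop after a bounded number of steps. The main subtlety to get right is precisely this point: it is the fixed finite bound $\g$ against strictly positive increments, rather than any acyclicity of $G$, that rules out unbounded looping and forces the extension process to reach a tight path.
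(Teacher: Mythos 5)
Your proof is correct and follows essentially the same route as the paper's: start from the single-vertex path $(v)$, which has cost $0\leq\g$, and greedily extend at either end while the threshold permits, noting that $v$ is never lost. The only difference is that you make termination explicit via the bound $\lfloor\g/\delta\rfloor+1$ with $\delta=\min_{e\in E}d(e)>0$, a detail the paper's proof leaves implicit but which is a worthwhile addition given that paths need not be simple.
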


\begin{proof}
For any vertex $u\in V$, we grow a tight path edge-wise. 
View initially $u$ as a path of length 1, hence, cost $0\leq\g$. 
As long as the path at hand is not tight, it can be extended 
into a longer one with cost still bounded by $\g$, never losing 
$u$ along the way, until tightness is reached.
\end{proof}

An alternative formulation of the same intuition of a tight path
could be to consider subpaths of paths longer by more than one 
additional edge: a proper subpath of a tight path would 
not be tight. Both approaches are equivalent:

\begin{proposition}
Given graph $G$ with positive edge costs
and threshold $\g\in I\!\!R$ with $\g\geq0$, 
a path $p$ in $G$ is tight if and only if
$d(p) \leq\g$ and
$d(p') >\g$ for every $p'$ that has $p$ as a proper subpath.
\label{p:extendpath}
\end{proposition}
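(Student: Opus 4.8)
The plan is to prove the two implications of the stated equivalence, observing first that both the original definition of tightness and the proposed reformulation share the common requirement $d(p)\leq\g$. Hence the real content lies in matching the two ways of expressing maximality: condition~2 of Definition~\ref{d:tightpath}, which forbids any single-edge extension from staying within the threshold, against the condition that \emph{every} proper superpath of $p$ has cost exceeding $\g$.

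The backward direction is immediate. Suppose $d(p)\leq\g$ and $d(p')>\g$ for every path $p'$ having $p$ as a proper subpath. In particular this applies to the two single-edge extensions $(u_0,u_1,\ldots,u_k)$ and $(u_1,\ldots,u_k,u_{k+1})$ whenever such edges exist in $E$, since each of these is itself a proper superpath of $p$. Thus condition~2 of Definition~\ref{d:tightpath} holds, and together with $d(p)\leq\g$ this makes $p$ tight.

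For the forward direction, suppose $p$ is tight and let $p'$ be an arbitrary path having $p$ as a proper subpath. First I would record the shape of such a $p'$: it must look like $(u_{1-a},\ldots,u_0,u_1,\ldots,u_k,u_{k+1},\ldots,u_{k+b})$ with $a,b\geq0$ and $a+b\geq1$, so that at least one edge is appended at some end. Assume $b\geq1$ (the case $a\geq1$ is symmetric). Then the single-edge extension $q=(u_1,\ldots,u_k,u_{k+1})$ is a legal path in $G$, because $p'$ already contains the edge $(u_k,u_{k+1})$, and by condition~2 of tightness we have $d(q)>\g$. The decisive observation is that $q$ is a (contiguous) subpath of $p'$, so $d(p')$ equals $d(q)$ plus the costs of the edges of $p'$ that lie outside $q$.

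The hard part — and really the only subtlety — is this final monotonicity step, and it is precisely where the positivity hypothesis on the edge costs enters: since every edge cost is positive, the extra contributions are nonnegative, yielding $d(p')\geq d(q)>\g$. Were negative or zero costs allowed, a longer path could undercut a shorter one and the argument would collapse. This is exactly why it suffices to control one-edge extensions rather than arbitrary superpaths. With both directions established, the equivalence follows.
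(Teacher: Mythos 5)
Your proof is correct and follows essentially the same route as the paper's: the backward direction specializes the superpath condition to one-edge extensions, and the forward direction locates a one-edge extension of $p$ inside an arbitrary proper superpath $p'$ and uses positivity of the edge costs to conclude $d(p')\geq d(q)>\g$. You also correctly identify positivity as the point where the argument would fail otherwise, matching the paper's closing remark.
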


\begin{proof}
This is easily argued. Assume that $p$ is tight: if $p$ is a proper 
subpath of $p'$, at least one additional edge is present in $p'$;
moreover, either one such additional edge comes after the last vertex 
of $p$, or before its first vertex. 
Then, as $p$ is tight, the cost of $p$ plus the cost of that extra edge 
is already beyond the cutoff and the cost of $p'$ is at least that sum.
The converse is proved by simply considering $p'$ that extends
$p$ by one edge at the beginning or at the end. Note that
the condition that costs are positive is relevant to this argument.
\end{proof}

% Move proofs to an appendix?

This paper focuses on the problem: given $G$ and $\g$ as indicated, 
find all the tight paths. We consider also a variant motivated by 
earlier work in the data analysis field, discussed next.

\section{A case of acyclic vertex-weighted graphs}
\label{s:acyclic}

In the context of a study of how partial implications can be 
obtained from closure spaces associated to transactional
datasets~\cite{Bal2010LMCS,BalTir2011EGC}, a close relative 
of tight paths arises; actually, an interesting particular 
case. There, the graph is naturally acyclic, weights actually 
apply to vertices, and edge costs are obtained from the 
vertices' weights (we develop a bit more this motivating 
connection below).

More precisely, we focus now on the following particular case:
our input graph is $G=(V,E,w)$, with vertex weights $w:V\to I\!\!R^+$. 
Moreover, the following inequality is assumed to hold: 
for all $(u,v)\in E$, $w(u) < w(v)$; {\em a fortiori}, 
$G$ is acyclic. 
Then, costs are set on the edges as weight differences
between the endpoints, $d((u,v)) = w(v) - w(u)$ for
every edge $(u, v)$, and tight paths are to be found 
on this graph for a cutoff value provided separately 
as before. Note that, by the assumed inequality about 
weights in edge endpoints, all these costs will be 
strictly positive, hence we fulfill the conditions 
of the general case.

However, now, tight paths correspond to tight ``vertex pairs'' 
in the sense that, for a pair $(u,v)$, the weights of all the 
paths from $u$ to $v$ coincide, and depend only on the weights 
of the endpoints. Thus, we speak of ``tight pairs'' instead, as 
endpoints of tight paths. That fact is easy to see but we report 
here a proof for the sake of completeness.

\begin{lemma}
Given graph $G$ with vertex weights $w$ where
$w(u) < w(v)$ for all $(u,v)\in E$ and corresponding
edge costs $d((u,v)) = w(v) - w(u)$, the cost of a
path is $d(u,u_2,\ldots,u_{k-1},v) = w(v) - w(u)$.
\label{l:acyclicpathsum}
\end{lemma}

\begin{proof}
We prove it by induction on the length of the path.
For length 1, $u = v$ and indeed the cost is $w(u) - w(u) = 0$.
Assume $k > 1$, and 
$d(u,u_2,\ldots,u_{k-1}) = w(u_{k-1}) - w(u)$
as inductive hypothesis. Then $d(u,u_2,\ldots,u_{k-1},v)$ is:
$$
d(u,u_2,\ldots,u_{k-1}) + d((u_{k-1},v)) = 
w(u_{k-1}) - w(u) + w(v) - w(u_{k-1}) = 
w(v) - w(u).
$$
Note that, for $k = 2$, that is, for a path consisting of a 
single edge, this boils down exactly to how the cost of the edge
has been defined. 
\end{proof}

\begin{example}
\label{ex:acyclic13}
The graph in Figure~\ref{fig:acyclic13} comes from a closure space
associated to a toy dataset. The construction of the closure space
provides us (after a minimal preprocessing) with the labels shown 
in the vertices. We then compute the differences to obtain the costs
for the edges and reach the particular case of a vertex-weighted acyclic
graph discussed in this section; edge costs are shown only for clarity. 
It is a simple matter to check out some of the equalities stated
by Lemma~\ref{l:acyclicpathsum} in this example. We will explain 
later how to work directly with the vertex labels instead,
in algorithmic terms, avoiding the edge-labeling process.
\end{example}

\begin{figure}
    \centering
    \includegraphics[width=0.8\linewidth]{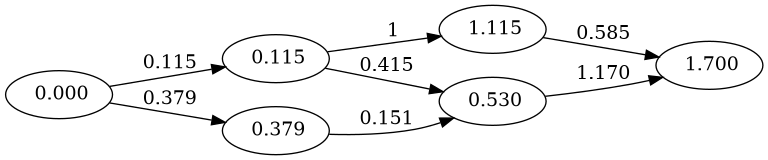}
    \caption{Graph for example \ref{ex:acyclic13}.}
    \label{fig:acyclic13}
\end{figure}

\subsection{Correspondence tightening}
\label{ss:ct}

Our next aim is to describe an algorithm to find tight pairs 
in the particular case of this section~\ref{s:acyclic}.
It has not been published, to our knowledge, in standard scientific 
literature but has been publicly available for quite some time as code 
in the implementation of certain operations on closure lattices in a
somewhat old 
open-source repository\footnote{See \url{https://github.com/balqui/slatt}, 
contributed by the current author; however, that repository is 
almost abandoned and a pointer to a more recent reimplementation will be 
given at the end of Section~\ref{ss:algct}.}.
It is not described in graph-theoretic terms; in fact, a main purpose
of this paper is to show in later sections the advantages of a 
graph-theoretic approach.
Instead, it is developed in terms of tightening of binary relations 
(a.k.a.~correspondences):
our relational setting emphasizes a more general view that departs
from distances and path weights.

We identify algorithmically useful properties
in two steps. First, we introduce a very general notion of tightening
for correspondences among ordered sets and prove some properties
that do not need additional assumptions. Then, we identify a 
couple of extra relevant assumptions that lead to further 
properties, so as to reach the desired algorithm.

Recall that a correspondence, or binary relation, is just
a subset of a cartesian product $A\times B$ of two sets
$A$ and $B$. We consider only finite sets.
Given $R\subseteq A\times B$ where both $A$ and $B$ are
partially ordered sets, we can define the \emph{tightening} of $R$
as another correspondence among the same sets, as follows: 
$$
t(R) = \{ (a,b)\in R \st 
\forall a'\in A \, \forall b'\in B \,
(a' \leq_{A} a \land b \leq_{B} b' \land (a', b')\in R \implies a' = a \land b' = b
\}
$$
That is, the tightening keeps those pairs of related elements $(a,b)\in R$
such that moving down from $a$ in $A$ and/or up from $b$ in $B$, changing at
least one of them, implies losing the relationship. 

Clearly $t(R) \subseteq R$.
We say that $R$ is \emph{tight} if 
$t(R) = R$. Of course, tightening again a tight relation does not change it:

\begin{proposition}
\label{p:idempotent}
$t(t(R)) = t(R)$.
\end{proposition}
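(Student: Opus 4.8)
The plan is to exploit the already-noted containment $t(R) \subseteq R$ together with the observation that the defining membership condition only becomes easier to satisfy as the ambient relation shrinks. Since $t(S) \subseteq S$ holds for every correspondence $S$, applying this with $S = t(R)$ yields $t(t(R)) \subseteq t(R)$ for free; so the only direction requiring work is $t(R) \subseteq t(t(R))$.

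For that inclusion, I would fix an arbitrary pair $(a,b) \in t(R)$ and check that it satisfies the condition defining membership in $t(t(R))$. Unfolding that condition, what must be shown is that whenever $(a',b') \in t(R)$ with $a' \leq_A a$ and $b \leq_B b'$, then $a' = a$ and $b' = b$. The key step is to weaken the hypothesis $(a',b') \in t(R)$ to the weaker $(a',b') \in R$, which is legitimate precisely because $t(R) \subseteq R$. Once this is done, the statement to be verified coincides verbatim with the condition that $(a,b) \in t(R)$ already asserts about pairs of $R$ — namely that moving $a$ down and/or $b$ up while remaining in $R$ forces the pair to coincide with $(a,b)$ — and so it holds by assumption. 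Hence $(a',b') = (a,b)$, and therefore $(a,b) \in t(t(R))$.

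The conceptual point, and the only thing one must avoid getting backwards, is the direction of this quantifier weakening: membership in $t(t(R))$ quantifies its universal condition over the smaller set $t(R)$, whereas membership in $t(R)$ quantifies the analogous condition over the larger set $R$. A requirement that holds against all witnesses drawn from $R$ holds \emph{a fortiori} against the subcollection of witnesses drawn from $t(R)$, which is exactly why the stronger property $(a,b)\in t(R)$ entails the weaker one $(a,b)\in t(t(R))$. I do not expect any genuine obstacle here beyond keeping this containment straight; in particular, no additional assumptions on the orders $A$ and $B$ are needed, and the argument is purely set-theoretic.
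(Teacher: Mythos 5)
Your proposal is correct and follows essentially the same route as the paper's own proof: the easy inclusion $t(t(R))\subseteq t(R)$ from the general fact $t(S)\subseteq S$, and the reverse inclusion by taking a witness $(a',b')\in t(R)$, weakening it to $(a',b')\in R$ via the containment $t(R)\subseteq R$, and then invoking the defining condition of $(a,b)\in t(R)$. No differences worth noting.
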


\begin{proof}
Since $t(R)\subseteq R$ for all $R$, we have that $t(t(R))\subseteq t(R)$.
Let $(a, b)\in t(R)$, and consider any $(a', b')\in t(R)$ with 
$a' \leq_{A} a$ and $b \leq_{B} b'$: then, by $t(R)\subseteq R$ again,
$(a', b')\in R$. Now, application of the definition of $t(R)$ leads 
to $a' = a$ and $b' = b$, hence $(a, b)\in t(t(R))$.
\end{proof}

To motivate the consideration of this operator, we explain first how it connects
with (the acyclic, vertex-weighted case of) the tight pairs problem.
Indeed, we can implement a correspondence of the following form:

\begin{definition}
\label{d:RGg}
For a graph $G=(V, E, w)$, with $w:V\to I\!\!R^+$ as
in the previous section, and a cutoff value $\g$,
$R_{G,\g}\subseteq V\times V$ consists of the pairs $(u, v)$ 
of vertices joined by a path in $G$ from $u$ to $v$ 
and such that $w(v) - w(u) \leq \g$.
\end{definition}

Recall from Lemma \ref{l:acyclicpathsum} that 
$w(v) - w(u)$ is the total distance along the path
from $u$ to $v$. The following holds:

\begin{theorem}
\label{t:tistight}
The tight pairs in $G$ coincide with $t(R_{G,\g})$, 
where the partial ordering at both sides is defined 
by the reflexive and transitive closure of the acyclic 
graph $G$ itself.
\end{theorem}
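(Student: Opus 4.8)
The plan is to prove the set equality by double inclusion, after first recording a pivotal observation: whether a $u$-to-$v$ path is tight depends only on the endpoints $u$ and $v$, not on the particular path chosen. This follows from Lemma~\ref{l:acyclicpathsum}: the cost of any $u$-to-$v$ path is $w(v)-w(u)$; any one-edge front-extension through a predecessor $u_0$ of $u$ (edge $(u_0,u)\in E$) yields a path of cost $w(v)-w(u_0)$, and any one-edge back-extension through a successor $v_1$ of $v$ yields cost $w(v_1)-w(u)$; none of these quantities mentions intermediate vertices. Hence the two conditions of Definition~\ref{d:tightpath} become conditions purely on the pair $(u,v)$, so ``tight pair'' is well-defined as the endpoint pair of a (equivalently, every) tight $u$-to-$v$ path. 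I would state this at the outset, since both inclusions rely on it.

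For the inclusion from tight pairs into $t(R_{G,\g})$, I would start from a tight path $p$ from $u$ to $v$. Condition~1 of tightness gives $w(v)-w(u)\le\g$, and a $u$-to-$v$ path exists, so $(u,v)\in R_{G,\g}$. To see it survives tightening, suppose $(u',v')\in R_{G,\g}$ with $u'\le u$ and $v\le v'$ in the reachability order, and assume for contradiction that $u'\ne u$ (the case $v'\ne v$ is symmetric). Then a path from $u'$ to $u$ uses at least one edge; its last edge $(u_0,u)\in E$ exhibits a predecessor $u_0$ of $u$ with $u'\le u_0$. Front-extending $p$ by $(u_0,u)$ gives cost $w(v)-w(u_0)>\g$ by tightness. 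Since $u'\le u_0$ and $v\le v'$ force $w(u')\le w(u_0)$ and $w(v)\le w(v')$, I obtain $w(v')-w(u')\ge w(v)-w(u_0)>\g$, contradicting $(u',v')\in R_{G,\g}$. Thus $u'=u$ and, symmetrically, $v'=v$, so $(u,v)\in t(R_{G,\g})$.

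For the converse, let $(u,v)\in t(R_{G,\g})$. Membership in $R_{G,\g}$ supplies a $u$-to-$v$ path and the inequality $w(v)-w(u)\le\g$, which is tightness condition~1. For condition~2, suppose a front-extension were possible within budget: an edge $(u_0,u)\in E$ with $w(v)-w(u_0)\le\g$. Then $(u_0,v)\in R_{G,\g}$ (the extended path witnesses reachability), while $u_0\le u$ and $v\le v$; the definition of $t$ would force $u_0=u$, contradicting $w(u_0)<w(u)$. A symmetric argument rules out any back-extension $(v,v_1)\in E$ with $w(v_1)-w(u)\le\g$, using $(u,v_1)\in R_{G,\g}$. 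Hence every admissible one-edge extension exceeds $\g$, so the path is tight and $(u,v)$ is a tight pair.

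I expect the main obstacle to be bookkeeping around the direction conventions of the tightening operator: moving down on the left ($a'\le_A a$) corresponds to prepending a predecessor of $u$, whereas moving up on the right ($b\le_B b'$) corresponds to appending a successor of $v$, and one must match these correctly to the two ends in Definition~\ref{d:tightpath}. The only genuinely non-formal point is that a witness $u'$ may lie several edges before $u$; this is handled uniformly by passing to the immediate predecessor $u_0$ on a $u'$-to-$u$ path and invoking weight monotonicity $w(u')\le w(u_0)$, which reduces the multi-edge case to the single-edge tightness condition.
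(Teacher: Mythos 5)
Your proof is correct and follows essentially the same route as the paper's: both directions of the double inclusion rest on Lemma~\ref{l:acyclicpathsum} to express every path cost as $w(v)-w(u)$ and then match the extension condition of Definition~\ref{d:tightpath} against the definition of $t$. The only difference is in handling a witness $(u',v')$ lying several edges away: the paper dispatches this in one stroke by citing Proposition~\ref{p:extendpath} (any path properly containing a tight path costs more than $\g$), whereas you re-derive the same fact by passing to the immediate predecessor $u_0$ and invoking weight monotonicity along reachability --- slightly longer but equally valid.
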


\begin{proof}
For $G = (V, E, w)$, denote $E^*$ the reflexive and transitive
closure of the edge relation. Due to acyclicity, it defines
a partial order $\leq_E$.
Let us consider a tight pair in $G$, say $(u, v)$, with $u\leq_E v$. 
This means that there is a path in $G$ from $u$ to $v$; further, 
$w(v) - w(u) \leq \g$, which implies that $(u, v) \in R_{G,\g}$. 
By Proposition~\ref{p:extendpath}, an extended path from
$u'$ to $u$ then to $v$ then to $v'$ incurs in a distance 
$w(v') - w(u') > \g$ unless it coincides with $(u, v)$,
that is, if $u'\leq_E u \land v\leq_E v' \land (u', v') \in R_{G,\g}$
then $u' = u$ and $v = v'$. The argumentation works both ways.
\end{proof}

\subsection{Algorithm for correspondence tightening}
\label{ss:algct}

Given the graph and the threshold, we obtain tight pairs
in three phases. The first phase constructs $R_{G, \g}$
and the next two phases reach the tightening, left-hand side 
first, then right-hand side. Given that the right-hand process 
is applied on the outcome of the previous phase, and not on the literal 
correspondence obtained from the graph, we must argue
separately why the result is actually correct.

Let us clarify these one-sided versions of tightening first.
Consider $R\subseteq A\times B$ and assume that $B$ is equipped 
with a partial order. Then, we can define the \emph{$r$-tightening} of a
correspondence as another correspondence among the same
sets, as follows: for $R\subseteq A\times B$, $t_r(R)$ 
contains the pairs $(a,b)\in R$ such that $b$ is maximal 
in $\{ b\in B \st (a,b)\in R\}$. Formally, $(a,b)\in t_r(R)$
when $(a,b)\in R$ and
$$
\forall b'\in B \, (b \leq_{B} b' \land (a, b')\in R \implies b' = b).
$$

Similarly, when $A$ has a partial order, 
one can propose the dual definition of $\ell$-tightening on the
first component in the natural way: $t_{\ell}(R)$ contains 
the pairs $(a,b)\in R$ such that $a$ is minimal in 
$\{ a\in A \st (a,b)\in R\}$. Again, formally, 
$(a,b)\in t_{\ell}(R)$ when $(a,b)\in R$ and
$$
\forall a'\in A \, (a' \leq_{A} a \land (a', b)\in R \implies a' = a).
$$
We can consider pairs $(a,b)\in R$ where both $a$ is minimal in
the set $\{ a\in A \st (a,b)\in R\}$ and $b$ is maximal in 
$\{ b\in B \st (a,b)\in R\}$, that is, requiring simultaneously
the conditions of $t_r$ and of $t_{\ell}$. How does $t$ compare to it?

\begin{proposition}
\label{p:lateralt}
\begin{enumerate}
\item
$t_{r}(R) \cup t_{\ell}(R) \subseteq R$.
\item
$t(R) \subseteq t_{r}(R) \cap t_{\ell}(R)$.
\item
If $t(R) = R$ then $t_{\ell}(R) = R = t_{r}(R)$.
\end{enumerate}
\end{proposition}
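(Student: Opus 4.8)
The plan is to handle the three parts in order, since part~3 will follow immediately from the first two together with the trivial observation that each lateral tightening is contained in $R$. Throughout, the main device is \emph{reflexivity} of the two partial orders: the two-sided condition defining $t(R)$ quantifies over pairs $(a', b')$ with $a' \leq_{A} a$ and $b \leq_{B} b'$, and by instantiating one of these with equality ($a' = a$ or $b' = b$) one recovers exactly the one-sided conditions of $t_{\ell}$ and $t_{r}$.

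For part~1, I would simply observe that the defining condition of both $t_{r}(R)$ and $t_{\ell}(R)$ explicitly includes the requirement $(a, b) \in R$; hence each of $t_{r}(R)$ and $t_{\ell}(R)$ is a subset of $R$ on its own, and so is their union.

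For part~2, I would take an arbitrary $(a, b) \in t(R)$ and show it lies in both $t_{r}(R)$ and $t_{\ell}(R)$. To place it in $t_{r}(R)$, I instantiate the universal quantifier in the definition of $t(R)$ with $a' = a$, which is legitimate since $a \leq_{A} a$ by reflexivity: any $b'$ with $b \leq_{B} b'$ and $(a, b') \in R$ is then forced to satisfy $b' = b$, which is precisely the $t_{r}$ condition. The argument for $t_{\ell}(R)$ is dual, instantiating instead with $b' = b$ using $b \leq_{B} b$. Together these give $t(R) \subseteq t_{r}(R) \cap t_{\ell}(R)$.

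For part~3, assuming $t(R) = R$, part~2 yields $R = t(R) \subseteq t_{r}(R) \cap t_{\ell}(R)$, so in particular $R \subseteq t_{r}(R)$ and $R \subseteq t_{\ell}(R)$; combined with the reverse inclusions supplied by part~1, both lateral tightenings coincide with $R$. The only point demanding care is the correct instantiation of the nested quantifiers in part~2 --- making sure that fixing one coordinate by reflexivity genuinely leaves the other coordinate free to range --- but this is a matter of bookkeeping rather than a real obstacle.
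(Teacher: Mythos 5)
Your proposal is correct and follows essentially the same route as the paper: part~1 is immediate from the definitions, part~2 is obtained by instantiating $a'=a$ (resp.\ $b'=b$) in the two-sided condition, and part~3 combines the two inclusions to force all the sets to coincide. The extra care you take in spelling out the reflexivity step is harmless and matches the paper's intent.
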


\begin{proof}
\begin{enumerate}
\item
By definition, in the same way as $t(R) \subseteq R$.
\item
Taking $a' = a$ in the definition of $t(R)$ proves inclusion in
$t_r(R)$ and, likewise, taking $b' = b$ completes the proof.
\item
By the previous two parts, we get sandwiched between $t(R)$ and 
$R$ all of 
$t_{r}(R) \cap t_{\ell}(R)$, $t_{r}(R)$, $t_{\ell}(R)$ and 
$t_{r}(R) \cup t_{\ell}(R)$.
If, besides, $R \subseteq t(R)$, then all these sets must coincide.
\end{enumerate}
\end{proof}

We obtain further mileage when both partial orders are the same and $R$ is
a subrelation of the partial order having a sort of ``convexity'' property:

\begin{definition}
\label{d:convex}
Let $(A, \leq)$ be a partially ordered set and $R\subseteq A\times A$;
$R$ is \emph{convex} (with respect to $\leq$) if 
$(a, b)\in R \implies a \leq b$ and whenever 
$a \leq b \leq c$ and $(a, c)\in R$ then $(a, b)\in R$ and 
$(b, c)\in R$ too. 
\end{definition}

Then, we can strengthen Proposition~\ref{p:lateralt} further:

\begin{proposition}
\label{p:lateraltstrong}
If $R\subseteq A\times A$ is convex with respect to 
a partial order $(A, \leq)$ then:
\begin{enumerate}
\item
$t(R) = t_{r}(R) \cap t_{\ell}(R)$.
\item
$t(R) = R$ if and only if $t_{\ell}(R) = R = t_{r}(R)$.
\end{enumerate}
\end{proposition}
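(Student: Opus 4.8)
The plan is to establish part~1 first and then read off part~2 almost for free. For part~1 the inclusion $t(R) \subseteq t_r(R) \cap t_\ell(R)$ is already supplied by Proposition~\ref{p:lateralt}(2), so the entire content lies in the reverse inclusion $t_r(R) \cap t_\ell(R) \subseteq t(R)$; this is precisely where the convexity hypothesis must be spent, since Proposition~\ref{p:lateralt} alone does not deliver equality.

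To prove that reverse inclusion I would fix a pair $(a,b) \in t_r(R) \cap t_\ell(R)$ and pick an arbitrary potential witness against its tightness, namely $(a',b') \in R$ with $a' \leq a$ and $b \leq b'$; membership of $(a,b)$ in $t(R)$ amounts exactly to forcing $a' = a$ and $b' = b$. The first step is to observe that, since $(a,b)\in R$, convexity yields $a \leq b$, so all four elements align into the chain $a' \leq a \leq b \leq b'$. The decisive move is then to apply convexity to $(a',b')$ twice: along $a' \leq b \leq b'$ it produces $(a',b) \in R$, and along $a' \leq a \leq b'$ it produces $(a,b') \in R$. Finally the one-sided conditions attached to $(a,b)$ close the argument: $(a,b) \in t_\ell(R)$ together with $(a',b) \in R$ and $a' \leq a$ forces $a' = a$, while $(a,b) \in t_r(R)$ together with $(a,b') \in R$ and $b \leq b'$ forces $b' = b$.

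Part~2 then follows with no further work. The forward implication is exactly Proposition~\ref{p:lateralt}(3), which does not even need convexity. For the converse, $t_\ell(R) = R = t_r(R)$ gives $t_r(R) \cap t_\ell(R) = R$, and substituting this into the identity just proved in part~1 yields $t(R) = R$. The only delicate point I foresee is directing the two convexity applications at the right intermediate pairs: the single threatening witness $(a',b')$ must be split into $(a',b)$, which feeds the $\ell$-side minimality of $a$, and $(a,b')$, which feeds the $r$-side maximality of $b$. Once the chain $a' \leq a \leq b \leq b'$ is in hand, however, both splits are immediate instances of the convexity definition, so no real calculation remains.
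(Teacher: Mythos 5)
Your proposal is correct and follows essentially the same route as the paper: the chain $a' \leq a \leq b \leq b'$, the two convexity applications splitting the witness $(a',b')$ into $(a',b)$ and $(a,b')$, and the derivation of part~2 from part~1 together with Proposition~\ref{p:lateralt} all match the paper's argument. No gaps.
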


% \jlb{Remove a newpage command here maybe, see where everything falls}
% \newpage

\begin{proof}
\begin{enumerate}
\item
To see the inclusion that is not covered by Proposition~\ref{p:lateralt},
assume $(a, b) \in t_{r}(R) \cap t_{\ell}(R)$, that is:
$$
\forall a'\in A \, (a' \leq_{A} a \land (a', b)\in R \implies a' = a)
$$
and
$$
\forall b'\in B \, (b \leq_{B} b' \land (a, b')\in R \implies b' = b)
$$
where now, actually, $A = B$ and $\leq_A$ is the same as $\leq_B$.
Suppose that $(a'', b'')\in R$ with $a'' \leq a \leq b \leq b''$.
Then, by convexity, both $(a'', b)\in R$ and $(a, b'')\in R$,
and we can infer $a'' = a$ and $b'' = b$, whence $(a, b) \in t(R)$.
\item
To see the implication that is not covered by Proposition~\ref{p:lateralt},
assume $t_{\ell}(R) = R = t_{r}(R)$: then, by the previous part,
$t(R) = t_{r}(R) \cap t_{\ell}(R) = R \cap R = R$.
\end{enumerate}
\end{proof}

Some condition akin to convexity is necessary for this proposition to
go through, as it fails in general. For example, consider 
$V = \{1, 2, 3, 4\}$ with its standard partial order (that is 
actually total). We choose $R = \{ (1, 2), (1, 4), (2, 3), (3, 4) \}$
(see figure \ref{fig:nonconvex}), which is easily seen not
to be convex because $(1, 4)$ is present but $(1, 3)$ is missing,
as is $(2, 4)$. There, we can see that the intersection property fails:
$t(R) = \{(1, 4)\}$ but 
$t_{r}(R) = \{(1, 4), (2, 3), (3, 4)\}$,
$t_{\ell}(R) = \{(1, 2), (1, 4), (2, 3)\}$, and
$t_{r}(R) \cap t_{\ell}(R) = \{(1, 4), (2, 3)\}$.

\begin{figure}
    \centering
    \includegraphics[width=0.50\linewidth]{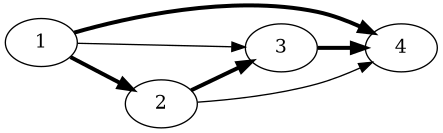}
    \caption{A non-convex relation, marked by wider edges, on the total order $\{1, 2, 3, 4\}$.}
    \label{fig:nonconvex}
\end{figure}

Of course, the convexity property holds for the relation 
$R_{G,\g}$ with respect to the ordering given by the reflexive and 
transitive closure of the edges of $G$ because, if
the distance from $a$ to $c$ is bounded by $\g$, then so are the
distances among any intermediate vertices along any path from $a$ to $c$.
Thus, we have all these properties available to 
apply our operators in order to find tight pairs of $G$.
Suppose that we are given $R$ by an explicit representation of the 
whole partial order (that is, the explicit list of all pairs in the 
reflexive and transitive closure of the set of edges $E$ in the
acyclic graph case)
plus a criterion to know whether a pair $a\leq b$ is actually in $R$
(the $\g$ bound in our case);
then, we can tighten $R$ (that is, compute an explicit representation 
of $t(R)$) by means of an algorithm that runs three phases:

\begin{enumerate}
\item 
Take from the partial order the pairs that belong to $R$ and discard 
the rest.
\item
Tighten $R$ at the left: once we have $R$, apply $t_{\ell}$ to obtain $t_{\ell}(R)$.
\item
Tighten at the right: apply $t_r$ to obtain $t_r(t_{\ell}(R))$.
\end{enumerate}

Then, we find tight pairs by feeding $R_{G,\g}$ into this algorithm;
of course, to prove it correct, we need to argue the following:

\begin{theorem}
Let $R\subseteq A\times A$ with $(A, \leq)$ a partial order. If 
$R$ is convex with respect to it, 
then, $t(R) = t_{r}(t_{\ell}(R))$; in particular, given $G$ and
$\g$, $t_{r}(t_{\ell}(R_{G,\g}))$ is the set of tight pairs of $G$
at threshold $\g$.
\end{theorem}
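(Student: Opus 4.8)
The plan is to prove the two inclusions separately, leaning on Proposition~\ref{p:lateraltstrong}, which already supplies the identity $t(R) = t_{r}(R) \cap t_{\ell}(R)$ for convex $R$. Once $t(R) = t_{r}(t_{\ell}(R))$ is established in full generality for convex relations, the application to tight pairs is immediate: the paragraph preceding the theorem observes that $R_{G,\g}$ is convex with respect to $\leq_E$, so the identity applies, and Theorem~\ref{t:tistight} identifies $t(R_{G,\g})$ with the tight pairs of $G$ at threshold $\g$.

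For the inclusion $t_{r}(t_{\ell}(R)) \subseteq t(R)$, I would take an arbitrary $(a,b) \in t_{r}(t_{\ell}(R))$ and show it lies in $t_{r}(R) \cap t_{\ell}(R)$, which equals $t(R)$ by Proposition~\ref{p:lateraltstrong}. Membership in $t_{\ell}(R)$ is immediate, since $t_{r}(t_{\ell}(R)) \subseteq t_{\ell}(R)$. The work is in showing $(a,b) \in t_{r}(R)$, that is, that $b$ is maximal among $\{ b' : (a,b') \in R \}$. Arguing by contradiction, suppose some $b'' $ with $b \leq b''$ and $b'' \neq b$ has $(a, b'') \in R$. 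The key step is to upgrade this to $(a, b'') \in t_{\ell}(R)$: if some $a' \leq a$ with $a' \neq a$ had $(a', b'') \in R$, then from $a' \leq a \leq b \leq b''$ and convexity applied to $(a', b'')$ we would obtain $(a', b) \in R$, contradicting the minimality of $a$ that comes from $(a,b) \in t_{\ell}(R)$. Hence $a$ is minimal in $\{ a' : (a', b'') \in R \}$, so $(a, b'') \in t_{\ell}(R)$; but then $b'' \neq b$ with $b \leq b''$ contradicts the maximality of $b$ witnessed by $(a,b) \in t_{r}(t_{\ell}(R))$.

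The reverse inclusion $t(R) \subseteq t_{r}(t_{\ell}(R))$ is more routine. Given $(a,b) \in t(R) = t_{r}(R) \cap t_{\ell}(R)$, membership in $t_{\ell}(R)$ is already in hand, so it remains to check that $b$ is maximal in $\{ b' : (a,b') \in t_{\ell}(R) \}$. If $(a, b'') \in t_{\ell}(R)$ with $b \leq b''$ and $b'' \neq b$, then $(a, b'') \in R$ because $t_{\ell}(R) \subseteq R$, contradicting the maximality of $b$ within $R$ guaranteed by $(a,b) \in t_{r}(R)$. No fresh appeal to convexity is needed here beyond what is already folded into the equality $t(R) = t_{r}(R) \cap t_{\ell}(R)$.

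I expect the main obstacle to be the upgrade step in the first inclusion---deriving $(a, b'') \in t_{\ell}(R)$ from $(a, b'') \in R$---since this is precisely where convexity is indispensable, and where a careless argument might instead try (and fail) to \emph{combine} relations along $a' \leq a \leq b''$ rather than \emph{split} the relation $(a', b'')$ that convexity actually governs. The counterexample on $\{1, 2, 3, 4\}$ given after Proposition~\ref{p:lateraltstrong} confirms that the identity genuinely fails without convexity, so the proof must invoke the hypothesis essentially at exactly this point.
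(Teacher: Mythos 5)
Your proof is correct and takes essentially the same approach as the paper's: both inclusions rest on Proposition~\ref{p:lateraltstrong}, and the decisive step---using convexity to split $(a',b'')\in R$ at $b$, obtain $(a',b)\in R$, and contradict left-minimality---is exactly the paper's argument, merely phrased set-theoretically rather than via the paper's explicit disjunct-by-disjunct comparison of the unfolded logical formulas. No gaps.
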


\begin{proof}
We first rewrite membership in $t_{r}(t_{\ell}(R))$ as follows:
by the definition of $t_r$,
$$
(a, b) \in t_{r}(t_{\ell}(R)) \iff 
(a,b)\in t_{\ell}(R) \land 
\forall b'\in B \, 
(b \leq_{B} b' \land (a, b')\in t_{\ell}(R) \implies b' = b).
$$
Now, by the definition of $t_{\ell}$, 
and substituting a disjunction with negated left-hand for 
the last implication,
$(a, b) \in t_{r}(t_{\ell}(R))$ if and only if $(a,b)\in R$ 
and
$$\forall a'\in A \, 
(a' \leq_{A} a \land (a', b)\in R \implies a' = a)
\land {}
$$
\removelastskip
$$
\forall b'\in B \, 
(\lnot b \leq_{B} b' \lor (a, b')\notin t_{\ell}(R) \lor b' = b).
$$
Finally, replacing the last occurrence of $t_{\ell}$ by its
definition while taking into account that $(a,b)\in R$ has been
already stated, $(a, b) \in t_{r}(t_{\ell}(R))$ if and only if
\begin{enumerate}
\item 
$(a,b)\in R \land {}$ 
\item
$\forall a'\in A \, (a' \leq_{A} a \land (a', b)\in R \implies a' = a) \land {}$
\item
$\forall b'\in B$:
\begin{enumerate}
\item 
$\lnot b \leq_{B} b' \lor {}$
\item 
$(a, b')\notin R \lor {}$
\item 
$\exists a'\in A \, (a' \leq_{A} a \land (a', b')\in R \land a' \neq a) \lor {}$
\item 
$b' = b$
\end{enumerate}
\end{enumerate}

As $R$ is assumed to be convex, we have part 1 of
Proposition~\ref{p:lateraltstrong} available: $(a, b) \in t(R)$ if and 
only if $(a, b) \in t_{r}(R) \cap t_{\ell}(R)$, that is, with the same
transformation of the last implication connective, 
$(a, b) \in t(R)$ if and 
only if 
\begin{enumerate}
\item 
$(a,b)\in R \land {}$ 
\item
$\forall a'\in A \, (a' \leq_{A} a \land (a', b)\in R \implies a' = a) \land {}$
\item 
$\forall b'\in B \, (b \leq_{B} b' \land (a, b')\in R \implies b' = b)$,
which is the same as saying that 
for all $b'\in B$:
\begin{enumerate}
\item 
$\lnot b \leq_{B} b' \lor {}$
\item 
$(a, b')\notin R \lor {}$
\item 
$b' = b$
\end{enumerate}
\end{enumerate}

We see that the only difference is part (3c) of the first enumeration.
So, first, the implication from $(a, b) \in t(R)$ to 
$(a, b) \in t_{r}(t_{\ell}(R))$ holds because this second
statement amounts, simply, to adding a disjunct to one of the clauses.

For the converse, we assume that $(a, b) \notin t(R)$
but $(a, b) \in t_{r}(t_{\ell}(R))$, and we will be able to reach
a contradiction. As parts 1 and 2 coincide, $(a, b) \notin t(R)$
must fail its part 3, so that 
$\exists b'\in B \, (b \leq_{B} b' \land (a, b')\in R \land b' \neq b)$
or, equivalently, all three disjuncts (3a), (3b) and (3c) fail. Then, 
for such a particular $b'$, 
the only way to satisfy part 3 of $(a, b) \in t_{r}(t_{\ell}(R))$ is
via the extra disjunct:
$\exists a'\in A \, (a' \leq_{A} a \land (a', b')\in R \land a' \neq a)$.

Now, for such $a'$, convexity 
tells us that $(a, b) \in R$ implies $a \leq_A b$ and that from
$a' \leq_A a \leq_A b \leq_A b'$ and $(a', b')\in R$ we infer
$(a', b)\in R$. But, then, part 2 implies that $a' = a$ whereas
the way $a'$ was chosen includes the condition $a' \neq a$.
Thus, from our assumption that $(a, b) \notin t(R)$ and 
$(a, b) \in t_{r}(t_{\ell}(R))$ we reach a contradiction,
and the equality in the first part of the statement is proved.
The second sentence follows immediately by Theorem~\ref{t:tistight}.
\end{proof}

The details of implementing these three phases are spelled out in 
Algorithm~\ref{a:slatt}. In it, we denote again by $\leq_E$ 
the partial order $E^*$ defined by the (reflexive and
transitive closure of the) edges $E$ of $G$
(and by $<_E$ its corresponding strict version).
We expect to receive the edges of $E^*$ organized in 
``lists of predecessors'', one list per vertex: 
for each vertex $v$, the list contains all the vertices $u$
for which there is a path from $u$ to $v$. This representation
of the graph is chosen because it fits, in the data analysis 
application alluded to above, how the previous phase 
(the closure miner) is providing its output; it would be
a simple matter to adapt the algorithm to alternative 
representations in other use cases. In that application,
the process that produces the graph provides the vertices 
in a topological order, and the existence of a path connecting 
two vertices can be checked in constant time with respect to the
graph size because, in that case, it reduces to a subset/superset 
relationship.

\begin{algorithm}
\DontPrintSemicolon
\KwData{graph $G$: lists of predecessors per vertex, threshold $\g$}
\KwResult{set containing the tight pairs in $G$ for threshold $\g$}
\For{each vertex $v$}{
  create a new list $R_v$ containing the predecessors $u$ of $v$ 
  found in the input list of~$v$ for which $w(v) - w(u) \leq \g$
}
\Comment{The lists $R_v$ implement now $R_{G, \g}$}
\For{each vertex $v$}{
  remove from $R_v$ the vertices that are not minimal in $R_v$
  according to $\leq_E$
}
\Comment{The lists $R_v$ implement now $t_{\ell}(R_{G, \g})$}
\For{each vertex $v$}{ 
  \For{each vertex $u$ such that $u <_E v$}{
      remove from $R_u$ the vertices that appear in $R_v$ as well
    }}
\Comment{The lists $R_v$ implement now $t_r(t_{\ell}(R_{G, \g}))$}
\Return{the list of all pairs $(u, v)$ where $u$ remains in $R_v$}
\BlankLine
\BlankLine
\caption{Tight pairs via correspondence tightening}
\label{a:slatt}
\end{algorithm}

Thus, $t_{\ell}(R_{G,\g})$ is computed by replacing each list 
of predecessors by a list of left-tight predecessors;
then, $t(R_{G,\g})$ is computed by filtering again
these predecessor lists, $t_{\ell}(R_{G,\g})$, so that
each element in a list remains only in those lists 
corresponding to vertices that are maximal among 
those vertices where the element appears in the list.

\begin{example}
\label{ex:acyclic13revisited}
Let us return to Example~\ref{ex:acyclic13} and the graph in Figure~\ref{fig:acyclic13}. Algorithm~\ref{a:slatt} would receive
it as lists of predecessors: an empty list for vertex 0.000, 
lists consisting only of vertex 0.000 for both vertices 0.115 
and 0.379, lists with all these three vertices for both 1.115
and 0.530, and a list with all these five vertices for the sink
vertex 1.700. We are using the weights to refer to vertices, 
given that, in this particular case, all weights are different. 
For $\g = 0.9$, the first 
phase of the algorithm filters out of the lists those predecessors 
that are too far away: the list for vertex 1.115 becomes empty, 
the same vertex is the only predecessor left for node 1.700, and 
the remaining lists don't lose any element. In the second phase,
the two lists that were reduced stay the same but, 
in all the other lists, 
only vertex 0.000 is minimal and is the only one that remains:
this amounts to left-tightening. Then, in the third phase, we
tighten at the right: vertices 0.115 and 0.379 lose their single
left-tight predecessor 0.000 because it also appears as predecessor
of the later vertex 0.530. Thus, we obtain two tight pairs:
$(0.000, 0.530)$ and $(1.115, 1.700)$. Likewise, assume that 
$\g = 1$ instead. Then, the first phase leaves the same result
as before except that vertex 1.115 keeps its only predecessor as 
the edge with cost 1 is not above the bound, and this predecessor
survives also left and right tightening yielding an additional
tight path $(0.115, 1.115)$. At $\g = 2$, a bound larger than 
all path costs, all predecessors are kept, only 0.000 survives 
left-tightening, only 1.700 survives right-tightening, and only 
one pair $(0.000, 1.700)$ is output, whereas, finally, at $\g = 0.5$ 
the tight paths have all a single edge with cost less than that bound.
\end{example}

An open source Python implementation of the algorithm is 
available in GitHub\footnote{See 
\url{https://github.com/balqui/tightpaths}, file 
\texttt{src/slatt.py}; the graph in Figure~\ref{fig:acyclic13}
is available in file \url{../graphs/fromdata/e13b_logsupp.elist}, 
except that we have reduced the floats' precision in the image for readability.
It is derived from the relational toy dataset 
\url{../graphs/fromdata/e13b.td} also there.}.

\subsection{Analysis}

A glance suffices for a $O(|V|^3)$ time bound for the last loop, that
dominates easy $O(|V|^2)$ time bounds for the rest. For a finer analysis,
the first two loops can be seen as tracing all the predecessors of each 
vertex, for a $O(|E^*|)$ bound, not really smaller unless the order is
far from a total order and has lots of uncomparable vertices. Still they 
are dominated by the third loop where, again, a careful implementation 
will launch the removal operation $O(|E^*|)$ times; the removal itself, 
if the lists are kept sorted, can be implemented by a merge-like simultaneous
traversal, still with $O(|V|)$ in worst case but, if a degree bound is 
available, say $\delta$, we may get the whole runtime down to 
$O(|E^*|\delta)$.

It is not difficult to come up with additional ideas for potential
improvements of the speed of this algorithm; however, we have chosen
to stick to the formulation that has been available for some time in
\url{https://github.com/balqui/slatt}, as explained above. We proceed 
next, instead, to describe a graph-theoretic approach that will provide 
appropriate guarantees of faster processing.

\section{Algorithm for the general case}

One natural option to solve the general problem of finding tight paths
as per Definition~\ref{d:tightpath} is to consider each vertex as a 
candidate to an initial endpoint of one or more tight paths; then, 
an edge traversal starting at that vertex can keep track of the added 
costs and identify the corresponding final endpoints. Note, however, 
that different paths connecting the same pair of vertices, 
but possibly having different weights, may or may not be 
tight, as in Example~\ref{ex:simple}; hence, we need a
traversal of reachable edges, not just of reachable vertices,
and we must accept visiting vertices more than once. The different
paths employed for each visit are to be preserved, and this requires 
somewhat sophisticated bookkeeping.

% Compare maybe to the discussions about dfs in the blog of Eppstein.

We loosely follow a depth-first-like strategy 
(see \cite{Sedgewick2011} and/or \cite{Valiente2021})
but, as just argued, 
not of vertices but of edges. 
Strict depth-first traversal of all edges reachable from an initial 
vertex is possible but not very simple, e.g.~the solution proposed in 
the method {\tt edge\_dfs} of the Python package NetworkX maintains a 
stack of half-consumed iterators. Moreover, tight paths may not be
simple paths so repeatedly traversing the same edge may be necessary.
On the other hand, solving our problem does not require us to stick 
to a strict depth-first search order, hence we present a related but 
somewhat simpler algorithm.

% \looseness = -1
Our algorithm is spelled out as Algorithm~\ref{a:tightpaths}. It is to be 
called for each vertex in turn as initial vertex. A bookkeeping tree rooted 
at the current initial vertex (hence called \emph{root}) maintains the 
separate paths, and a stack of tuples is used in a rather standard way. 
For convenience, we assume the availability of a very large cost that 
is always strictly larger than all added costs along the graph and which keeps 
this property when added to any other cost value; we denote it as $\infty$ here, 
and note that the Python programming language offers a {\tt float} value, 
namely {\tt float("inf")}, that has these properties. 

\begin{algorithm}
\DontPrintSemicolon
\KwData{graph $G$, root vertex in $G$, threshold $\g$}
\KwResult{set $R$ containing tight paths in $G$ 
for threshold $\g$ starting at that root}
\BlankLine
let $u$ be the closest predecessor of the root vertex in $G$ if any\;
let $d_0$ be $d(u, \mathrm{root})$ if the current root has some predecessor, 
$\infty$ otherwise\;
initialize $R$ to $\emptyset$\;
initialize an empty tree and add to it a node representing the root vertex\;
initialize an empty stack and push the pair $(\mathrm{root}, 0)$ into it\; 
\BlankLine
\While{\rm stack not empty}{
  let $(v, d)$ be the top or stack, pop it\;
  \For{\rm each $u$ immediate successor of $v$ in $G$}{
    \If{$d + d(v, u) \leq \g$}
      {
      add to tree a new node representing $u$ linked to the tree node $v$\;
      push $(v, d + d(v, u))$ into the stack\;
      }
    }
\If{\rm nothing was pushed into the stack in the \emph{for} loop and $d_0 + d > \g$}
   {add to $R$ the path from the root to $v$ reconstructed from tree (see text)}
}
\BlankLine
\Return{$R$}
\BlankLine
\BlankLine
\caption{Tight paths starting at a root vertex}\label{a:tightpaths}
\end{algorithm}

Distance $d_0$ indicates how to extend paths starting at the root with 
a previous additional vertex, with as short an extra distance as possible.
A value of $\infty$ indicates that no predecessors of the root exist at all. 
Thus, for every candidate to a tight path starting at the root, of cost $d$, 
the condition $d_0 + d > \g$ checks that all potential extensions with a
predecessor of the root lead to exceeding the threshold.

A tuple $(u, d)$ in the stack refers to vertex $u$ having been reached 
from the root vertex
through a path of cost $d \leq \g$. Potential extensions at the $u$ side 
are then considered: if there is any with a new total cost of at most $\g$, 
then the path so far is not yet tight, and all such extensions are added 
to the stack in order to keep building longer paths. Otherwise, the path 
is tight at the $u$ side, and is added to $R$ if it is tight as well at 
the root side, according to $d_0 + d$.

As already indicated, the same vertex in $G$ may be reached in different
ways with different total costs, and it is necessary to distinguish the
corresponding paths. To this end, we implement each vertex in the stack 
not as a pointer to $G$ but as a pointer to the tree. This will allows us 
to reconstruct the path from the tree. In turn, the nodes of the tree refer 
to vertices of the graph, which appear in the tree as many times as different 
paths lead to them. At the time of storing a (surrogate of a) vertex $v$ in the 
tree, as a successor of $u$ (which is also a tree node taken from the stack), 
it receives an additional fresh label to distinguish it from other surrogates 
of $v$, which refer to reaching also $v$ but via other paths. A pointer to 
the parent node $u$ in the tree will allow us to trace the path back, by just 
walking up the tree through these parent pointers.

An open source Python implementation of the algorithm, together with
a Python class \texttt{PathTree} supporting the bookkeeping, is 
available in the same GitHub repository already 
indicated\footnote{See 
\url{https://github.com/balqui/tightpaths}, files 
\texttt{src/tightpath.py} and \texttt{src/pathtree.py};
the graphs in Figures \ref{fig:simple}~and~\ref{fig:withcycles}
are available in files 
\url{../graphs/e1.elist}~and~\url{../graphs/e2.elist}.}.
The reader can look up there the specifics of the implementation of all 
these considerations. Also, a Boolean value \emph{mayextend} is employed 
in that implementation to cover the test about the growth of the stack in 
the last ``if''; here we found it clearer to spell out its meaning instead.

\subsection{Correctness}

To formalize the correctness proof of Algorithm~\ref{a:tightpaths}, 
we develop a series of lemmas. 
All of them refer to a call to the algorithm on graph~$G$, threshold~$\g\geq0$, 
and a root vertex, and to the stack and tree along that run. All proofs regarding 
paths in the graph $G$ or in the tree in the algorithm are constructed by induction 
either on the length of the path or on the iteration of the while loop. Along the 
proofs, we often refer to a tree node $v$ (or its correlate in a pair $(v, d)$ in 
the stack) as a vertex of $G$, since that vertex is uniquely defined.

\begin{lemma}
For each pair $(v, d)$ that enters the stack when the loop just popped $u$
inside the loop, 
\begin{enumerate}
\item 
there is in $G$ a path from the root to $v$ of total cost $d$;
\item
that path has $(u, v)$ as its last edge; 
\item
furthermore, $d\leq\g$ and, from that point 
onward, the tree contains a parent link from $v$ to $u$.
\end{enumerate}
\label{l:pathsTinG}
\end{lemma}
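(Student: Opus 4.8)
The plan is to establish all three parts simultaneously by induction on the order in which pairs are pushed onto the stack (equivalently, on the iteration of the while loop at which the push occurs). This is the natural choice here because every push produced inside the for loop is generated while processing a successor of a node that was popped, and that popped node was itself pushed at some strictly earlier moment; hence its own guarantees are already available as induction hypothesis by the time we push its successor.

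The anchor of the induction is the single push that does \emph{not} occur inside the for loop, namely the initial $(\mathrm{root}, 0)$. Strictly, the lemma quantifies only over pairs entering ``when the loop just popped $u$ inside the loop,'' so this initial pair is not itself an instance of the claim; its role is to guarantee that when the root is eventually popped, the pair $(\mathrm{root}, 0)$ corresponds to the trivial path $(\mathrm{root})$ of cost $0 \le \g$, with the root already present as a tree node. I would record this as a supporting fact before running the induction proper.

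For the inductive step, suppose the while loop has just popped a pair $(u, d_u)$ and, for some immediate successor $v$ of $u$ (so that $(u,v)\in E$) with $d_u + d(u, v) \le \g$, pushes the pair $(v, d)$ with $d = d_u + d(u, v)$. Parts (2) and (3) I expect to read off almost directly: the if-condition is exactly $d \le \g$, and the line that adds a fresh tree node for $v$ linked to the tree node $u$ installs the parent link from $v$ to $u$; since tree nodes and links are only ever created and never removed, that link persists from that point onward. For part (1) I would apply the induction hypothesis to the popped pair, which is legitimate because $(u, d_u)$ must have been pushed and popped strictly before $(v, d)$ can be pushed: either $u$ is the root, and the path $(\mathrm{root})$ of cost $d_u = 0$ is supplied by the anchor, or $(u, d_u)$ was itself pushed inside an earlier for loop, and the induction hypothesis supplies a path from the root to $u$ of cost exactly $d_u$. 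In either case, appending the edge $(u, v)$ yields a path from the root to $v$ of cost $d_u + d(u, v) = d$ whose last edge is $(u, v)$, which settles parts (1) and (2).

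The step I expect to be the main obstacle is not any individual deduction, since each of the three conclusions follows cleanly, but rather keeping the tree/graph bookkeeping honest throughout. I must be careful that the object popped as $u$ is a tree node, a distinguished surrogate of a graph vertex, so that the path reconstructed by walking up the parent links is a genuine path in $G$ and the cost $d_u$ stored in the popped pair is precisely the cost of \emph{that} path rather than of some other route to the same graph vertex. The fresh labelling that separates multiple surrogates of one graph vertex is exactly what makes this identification unambiguous, and invoking it correctly, so that ``the vertex $v$'' and ``the vertex $u$'' in the statement are well defined, is the delicate point of the argument.
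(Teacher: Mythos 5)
Your proposal is correct and follows essentially the same route as the paper's own proof: induction on the order of pushes (iterations of the while loop), with the initial pair $(\mathrm{root}, 0)$ serving as the base case supplying the trivial path of cost $0\leq\g$, the inductive hypothesis applied to the previously pushed-and-popped pair to extend its path by the edge $(u,v)$, and parts 2 and 3 read directly off the guard $d + d(u,v)\leq\g$ and the never-removed parent link. Your closing remark about tree-node surrogates matches the paper's own preliminary caveat that a tree node uniquely determines its graph vertex, so nothing is missing.
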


\begin{proof}
Along the loop, let $u$ be just popped from the stack together
with a cost $d'$. If $(v, d)$ enters then the stack,
then $v$ has been found as a successor of $u$ and $d = d' + d(u, v)$; 
inductively, at the time $(u, d')$ entered the stack we had  
a path in $G$ from the root to $u$ of cost $d'$, which extends
through edge $(u, v)$ into a path to $v$ of cost $d = d' + d(u, v)$.
The fact that $d\leq\g$ is checked explicitly before pushing $(v, d)$ 
into the stack, and the parent link is explicitly set at that point.
Links are never removed from the tree.

For the basis of the induction, observe that the path mentioned in 
part 1.~also exists, with length zero, for the pair (root${}, 0$) that 
initializes the stack, since $\g\geq0$, and that parts 2.~and~3.~are
not needed for the inductive argument.
\end{proof}

\begin{lemma}
Let $T$ be the tree at the end of the algorithm.
For each path in $G$ starting at the vertex corresponding
to the root of $T$, of total cost at most $\g$, ending, say, at $v$, 
its reversal appears as a sequence of parent pointers in $T$, from a 
tree node corresponding to $v$ upwards to the root.
\label{l:pathsGinT}
\end{lemma}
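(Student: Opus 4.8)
The plan is to prove this ``completeness'' direction by induction on the length of the path, as the dual of the ``soundness'' direction of Lemma~\ref{l:pathsTinG}. The invariant I would maintain is: for every path $P$ in $G$ from the root, of cost $d(P)\leq\g$, ending at a vertex $v$, the final tree $T$ contains a surrogate of $v$ whose chain of parent pointers, read upwards to the root, visits exactly the vertices of $P$ in reverse order. For the base case I take the length-$1$ path consisting of the root alone, of cost $0\leq\g$; the tree is initialized with a node for the root, and its trivial chain realizes this path.

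For the inductive step I take a path $P=(\mathrm{root},\ldots,u,v)$ of length $k\geq2$, split off its prefix $P'=(\mathrm{root},\ldots,u)$, and note that $d(P')\leq d(P)\leq\g$ since the removed last edge has positive cost. The inductive hypothesis then supplies a tree node $\tilde u$ (a surrogate of $u$) whose parent chain spells $P'$ in reverse. It remains to exhibit in $T$ a child $\tilde v$ of $\tilde u$ that is a surrogate of $v$: appending $\tilde v$ below $\tilde u$ produces a parent chain realizing $P$, which is what we want.

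The crux is to show that the processing of $\tilde u$ creates exactly that child. First I would record the bijection between the non-root nodes of $T$ and the pushes onto the stack (each such node is created together with a push carrying its own cost), the root being pushed at initialization; together with termination, which follows because positive costs and the bound $\g$ make the set of admissible paths, and hence $T$, finite, so the stack is empty at the end and every pushed pair is popped exactly once. Thus $\tilde u$ is eventually popped, as some pair $(\tilde u,d')$. By Lemma~\ref{l:pathsTinG}, the cost $d'$ with which $\tilde u$ entered the stack equals the cost of the path traced by its parent links, which by the inductive hypothesis is precisely $P'$; hence $d'=d(P')$. When $\tilde u$ is popped, the inner loop examines the successor $v$ of $u$, and the test $d'+d(u,v)=d(P')+d(u,v)=d(P)\leq\g$ succeeds, so a fresh surrogate $\tilde v$ of $v$ is linked below $\tilde u$ and, since links are never removed, survives in $T$. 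This closes the induction.

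The main obstacle I anticipate is the bookkeeping of the previous paragraph: pinning down that the node $\tilde u$ handed over by the inductive hypothesis is the very node later popped with the matching cost $d(P')$, rather than some other surrogate of $u$ reached along a different path. This is exactly what forces the two appeals above, namely to Lemma~\ref{l:pathsTinG} to equate the stored cost with the cost of the parent-traced path, and to termination to guarantee that $\tilde u$ is indeed popped. Once these are in place, the creation of $\tilde v$ is an immediate consequence of the loop's threshold test, and the rest of the argument is routine.
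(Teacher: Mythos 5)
Your proposal is correct and follows essentially the same route as the paper's own proof: induction on path length, peeling off the last edge $(u,v)$, using the inductive hypothesis to locate a tree node for the prefix that entered the stack with the cost of that prefix, and then using the fact that the stack empties to conclude that this node is popped, at which point the threshold test $d(P')+d(u,v)\leq\g$ succeeds and the child for $v$ is created. The extra bookkeeping you supply (the push/node bijection, the explicit appeal to Lemma~\ref{l:pathsTinG} for the stored cost, and the termination remark) only makes explicit what the paper's proof states more tersely.
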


\begin{proof}
By induction on the length of the path. The only paths of length 1 starting
at the root vertex both in $T$ and in $G$ correspond to each other.
Given a path of cost at most $\g$ from the root vertex to $v$ in $G$, 
for $v$ different from the root, the path has a last edge $(u, v)$
and the path from the root vertex to $u$ in $G$ has cost less than $\g$
due to the positive edge costs assumption. By inductive hypothesis, there
is a node in $T$ corresponding to that path to $u$ and, at the time this
node was added to $T$, it was also stored on the stack together with the
cost of the path to it.

As the stack ends up empty, at some later point $u$ had to be popped. Then 
$v$ is found to be a successor and the cost of the path from the 
root to $u$ popped from the stack, plus the cost of the edge $(u, v)$, 
is less than or equal to $\g$: $v$ is added to $T$ with $u$ as a parent, 
as the statement requires.
\end{proof}

\begin{theorem}
At the end of the algorithm, $R$ contains exactly the tight paths in $G$
for threshold~$\g$ that start at the root vertex.
\end{theorem}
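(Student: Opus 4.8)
The plan is to use Lemmas~\ref{l:pathsTinG} and~\ref{l:pathsGinT} to set up a faithful correspondence between the paths in $G$ that start at the root and have cost at most $\gamma$ on one side, and the tree nodes (equivalently, the pairs pushed onto the stack) on the other. Lemma~\ref{l:pathsTinG} gives that every pushed pair $(v,d)$ names a genuine path from the root to $v$ of cost $d\leq\gamma$, while Lemma~\ref{l:pathsGinT} gives the converse: every path of cost at most $\gamma$ from the root is recorded in the final tree as a chain of parent pointers. Because each push creates a fresh tree node, each node is pushed once, and the stack ends empty so each node is popped exactly once, this correspondence is a bijection, and the path reconstructed when a pair is popped is precisely the path named by that pair. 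Consequently it suffices to show that, for a pair $(v,d)$ naming a path $p$ from the root to $v$ of cost $d\leq\gamma$, the test governing whether $p$ is inserted into $R$ holds if and only if $p$ is tight.

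Since only paths of cost at most $\gamma$ are ever popped, clause~1 of Definition~\ref{d:tightpath} is automatic for every candidate, so I would reduce tightness to the two non-extendability conditions and match each to one half of the test. First I would treat the far end. Extending $p$ at $v$ means appending an edge $(v,u)$ for some successor $u$, giving cost $d+d(v,u)$; the \emph{for} loop pushes a pair for exactly those successors with $d+d(v,u)\leq\gamma$. Hence ``nothing was pushed'' is equivalent to $d+d(v,u)>\gamma$ for every successor $u$, which is precisely the statement that $p$ admits no extension of cost at most $\gamma$ at the $v$ end.

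Next I would treat the near (root) end, the more delicate half, because the algorithm summarizes all predecessors of the root by the single quantity $d_0$. Extending $p$ before the root means prepending an edge $(w,\mathrm{root})$ for a predecessor $w$, giving cost $d(w,\mathrm{root})+d$. Since $d_0$ is the cost of the \emph{closest} predecessor, $d_0\leq d(w,\mathrm{root})$ for every predecessor $w$; therefore $d_0+d>\gamma$ holds if and only if $d(w,\mathrm{root})+d>\gamma$ for all $w$, i.e.\ exactly when $p$ admits no extension of cost at most $\gamma$ at the root end. When the root has no predecessor, $d_0=\infty$ makes $d_0+d>\gamma$ hold automatically, which correctly matches the vacuous case of clause~2 in Definition~\ref{d:tightpath}.

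Finally I would combine the two halves: $p$ is inserted into $R$ exactly when both ``nothing was pushed'' and $d_0+d>\gamma$ hold, which by the previous two steps is exactly the conjunction of the two non-extendability conditions, hence exactly tightness of $p$. Together with the bijection this shows that $R$ contains each tight path from the root once and no other path, as claimed. I expect the main obstacle to be the bookkeeping argument that the tree-surrogate mechanism yields a genuine bijection, so that possibly non-simple tight paths are each tested once and reconstructed correctly, rather than the two threshold comparisons, which are routine once the correspondence is in place; a secondary point requiring care is that $d_0$, being the minimum over predecessors, legitimately decides the near-end condition for all predecessors simultaneously.
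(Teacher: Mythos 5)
Your proposal is correct and follows essentially the same route as the paper: both directions rest on Lemmas~\ref{l:pathsTinG} and~\ref{l:pathsGinT} to identify stack/tree entries with root-paths of cost at most $\g$, and then check that the pop-time test matches the two non-extendability clauses of Definition~\ref{d:tightpath}. Your treatment is somewhat more explicit than the paper's on two points the paper leaves implicit --- that the tree-surrogate mechanism yields a genuine bijection so each (possibly non-simple) candidate path is tested exactly once, and that $d_0$ being the \emph{minimum} over predecessors legitimately decides the root-end condition for all predecessors at once --- but these are refinements of the same argument, not a different one.
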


\begin{proof}
Let $T$ be again the tree at the end of the algorithm.
A tight path in $G$ starting at the root vertex has cost $d \leq\g$ 
and, by Lemma~\ref{l:pathsGinT}, its last endpoint $v$ appears in $T$,
to which it is added at some point along the algorithm: at the same
point, $v$ enters the stack together with the cost $d$.

Since the stack is empty at the end, $(v, d)$ must be popped at some time
and, since the path is tight, any additional vertex either beyond $v$ or
before the root vertex incurs in a cost larger than $\g$: at that point,
the tight path is added to $R$.

Conversely, a path added to $R$ is given by a pair $(v, d)$ popped
from the stack, which it had to enter earlier. If that was upon
initialization, $v$ is the root vertex and the path has cost $0\leq\g$ 
in $G$. Otherwise, it was added due to a pair $(u, d')$ popped from
the stack and, by Lemma~\ref{l:pathsTinG}, from that point onward 
the path upwards from $v$ in $T$ corresponds to a path in $G$ of 
cost $d\leq\g$.

In either case, upon popping $(v, d)$ there is no potential extension 
with cost bounded by~$\g$ after $v$ or before the root as, otherwise, the
path would not have been added to $R$. Hence, paths in $R$ are tight paths 
in $G$.
\end{proof}

For later comparisons, we note here that the correctness proof does not
require acyclicity. In fact, going round several times through a cycle may
be crucial to reaching a tight cost. We will discuss below the potential 
risks of infinite loops if negative weights are allowed.

\subsection{Analysis}

An obvious lower bound for time is the total length of the tight 
paths to be found, as they must be written out. 
By Proposition~\ref{p:vcover}, every vertex belongs to at least one 
tight path, hence the total length of the tight paths is at least $|V|$.
As we explain next, upper bounding the total length 
of the tight paths in terms of only the size of the graph 
is not possible. Even realizing that $\g$ is part of the input, 
we may end up with huge quantities (on the combined lengths 
of $G$ and $\g$) of tight paths. Consider the case of 
integer-valued $\g$ for a constant-size graph, namely, 
the one depicted in Figure~\ref{fig:withcycles}. 
Consider all the integers on $t$ bits, from 0~to~$2^t - 1$,
for an arbitrary integer $t$. 
We can associate to each such integer a different path 
on that graph by following sequentially the bits, say, 
left to right, and looping on A-B-C-A when it is a 0 and 
on A-D-E-A when it is a 1, paying a cost of 4 per loop: 
by setting $\g = 4t$, this set of paths is exactly the set 
of tight paths that start at A, and there are $2^t = 2^{\g/4}$ of them. 
(This is actually a double exponential, noting that the 
value of $\g$ is exponential in its length.)

\looseness = -1
However, we can develop a time complexity analysis by taking
into the picture the final size of the bookkeeping tree constructed
for each root vertex.
The time complexity of the initialization is constant except for
traversing the predecessors of the root, which is bounded by the
graph size (or, more tightly, by its in-degree).
This cost will be dominated by that of the rest of the computation.
In the combined while/for loops, the internal operations can be 
made of constant cost so we need to bound the number of loops or,
equivalently, the number of times the inequality $d + d(u, v) \leq \g$
is evaluated. In total, the successful inequalities correspond one 
to one with parent links in the tree, but further inequalities are
tested: for every node in the tree, all the successors in $G$ must
be considered. We can bound the time by the size of the tree times
the number of vertices or, again more tightly, times the out-degree of $G$.

Once we have the tree ready, the upwards traversals in order to 
fill in the resulting set~$R$ only start at leaves, because proper 
subpaths of a tight path are not 
tight. Hence, the total length of the tight paths is the sum of lengths 
of paths from leaves to root in the final tree~$T$. The costliest case 
is when the tree is binary as, otherwise, upwards traversals are shorter;
then the total sum is $|T|\log|T|$. 

Whenever the out-degree of $G$ grows at most as $\log|V|$,
which covers many cases of interest, the traversals of the tree 
dominate the cost, which is, then, proportional to the total length
of the output.
In graphs of larger out-degree, the construction of the tree dominates,
for a time bound of $O(|V||T|)$.

Recall that the tree-constructing algorithm is to be run with each vertex
as root. Then, the total time bound is proportional to the maximum between
$
\sum_{u\in V} L(u) 
$,
where $L(u)$ is the sum of lengths of tight paths starting at $u$,
and $|V|^2|T|$ where $|T|$ is the size of the largest $T$ as the root 
traverses $G$.

\section{Tight pairs revisited}

We return now to the particular case of Section~\ref{s:acyclic}.
Of course, one can directly apply Algorithm~\ref{a:tightpaths};
but it turns out to be more efficient to simplify the algorithm,
obtaining also better running times than with the approach described
in Section~\ref{ss:ct}.

First note that, by acyclicity, 
we only need to consider simple paths; and, as paths 
joining the same pair have all the same weight, we can 
move to a scheme closer to standard depth-first search: 
we do not need to maintain the tree of paths and we can, 
instead, keep the usual set of visited vertices. In a 
first evolution, the stack still contains vertices together 
with distances from the start node. 
This variant is spelled out as Algorithm~\ref{a:tightpairsnaive}.

\begin{algorithm}
\DontPrintSemicolon
\KwData{graph $G$, root vertex in $G$, threshold $\g$}
\KwResult{set $R$ containing tight pairs in $G$ 
for threshold $\g$ where the first component 
of all pairs is that root}
\BlankLine
let $u$ be the closest predecessor of the root vertex in $G$ if any\;
let $d_0$ be $d(u, \mathrm{root})$ if the current root has some predecessor, 
$\infty$ otherwise\;
initialize both $R$ (results) and $S$ (seen vertices) to $\emptyset$\;
initialize an empty stack and push the pair $(\mathrm{root}, 0)$ into it\; 
\BlankLine
\While{\rm stack not empty}{
  let $(v, d)$ be the top or stack, pop it\;
  add $v$ to $S$\;
  \emph{mayextend} = \emph{False}\;
  \For{\rm each $u$ immediate successor of $v$ in $G$}{
    \If{$d + d(v, u) \leq \g$}
      {
      \If{$u\notin S$}
        {
        push $(u, d + d(v, u))$ into the stack\;
        }
      \emph{mayextend} = \emph{True}\;
      }
    }
\If{\rm not \emph{mayextend} and $d_0 + d > \g$}
   {add to $R$ the pair $(\mathrm{root}, v)$}
}
\BlankLine
\Return{$R$}
\BlankLine
\BlankLine
\caption{Tight pairs in a vertex-weighted graph 
starting at a root vertex}\label{a:tightpairsnaive}
\end{algorithm}

The Boolean value \emph{mayextend} is made explicit now because
of a little subtlety: maybe nothing is added to the stack because
all successors were already visited, but actually some of them
correspond to extensions of the path that still obey the bound.
This no longer can be distinguished simply by checking that something 
entered the stack, and it is thus clearer to make the Boolean flag
explicit.
An implementation of Algorithm~\ref{a:tightpairsnaive} is 
also available in the same public GitHub repository\footnote{See 
\url{https://github.com/balqui/tightpaths}, file 
\texttt{src/tightpair.py}.}.

One may wonder whether Lemma~\ref{l:acyclicpathsum} spares us
storing the distances in the stack. Indeed, we can find the cost
of any path in constant time by subtracting endpoint labels and
this allows for an even simpler algorithm that we report,
for completeness, as Algorithm~\ref{a:tightpairs}. 

\begin{algorithm}
\DontPrintSemicolon
\KwData{graph $G$, root vertex in $G$, threshold $\g$}
\KwResult{set $R$ containing tight pairs in $G$ 
for threshold $\g$ where the first component 
of all pairs is that root}
\BlankLine
let $u$ be the closest predecessor of the root vertex in $G$ if any\;
let $d_0$ be $d(u, \mathrm{root})$ if the current root has some predecessor, 
$\infty$ otherwise\;
initialize both $R$ (results) and $S$ (seen vertices) to $\emptyset$\;
initialize an empty stack and push the root into it\; 
\BlankLine
\While{\rm stack not empty}{
  let $v$ be the top or stack, pop it\;
  add $v$ to $S$\;
  \emph{mayextend} = \emph{False}\;
  \For{\rm each $u$ immediate successor of $v$ in $G$}{
    \If{$w(u) - w(\mathrm{root}) \leq \g$}
      {
      \If{$u\notin S$}
        {
        push $u$ into the stack\;
        }
      \emph{mayextend} = \emph{True}\;
      }
    }
\If{\rm not \emph{mayextend} and $d_0 + w(v) - w(\mathrm{root}) > \g$}
   {add to $R$ the pair $(\mathrm{root}), v)$}
}
\BlankLine
\Return{$R$}
\BlankLine
\BlankLine
\caption{Alternative approach to tight pairs 
in a vertex-weighted graph}\label{a:tightpairs}
\end{algorithm}

\begin{example}
\label{ex:acyclic13again}
Let us return to the toy graph in Figure~\ref{fig:acyclic13} and 
set again $\g = 0.9$. We start with the source node as root: 
both successors get stacked, and one of them brings in turn into 
the stack vertex $0.530$. As this vertex is popped, the Boolean flag
``mayextend'' remains false and we find the pair $(0.000, 0.530)$. 
The other successor of the root is then popped and the algorithm does 
nothing: $0.530$ is already visited but allows for extension,
so no further change is made and the run for this root is finished.
When each of these two successors is taken as root, only one 
further successor is reached, namely $0.530$ again in both cases, 
but the $d_0$ value coming from $0.000$ prevents paths from being 
reported. No successor enters the stack at all for root $0.530$ and, 
finally, upon starting with $1.115$ the sink is added to the stack 
and, as popped, yields the second and last tight pair $(1.115, 1.700)$.
Of course, nothing happens at the call on root $1.700$.
\end{example}

An implementation of the simplified Algorithm~\ref{a:tightpairs} is 
also available in the same public GitHub repository\footnote{See 
\url{https://github.com/balqui/tightpaths}, file 
\texttt{src/tightpair\_vw.py}, where ``\texttt{vw}'' stands for 
``vertex weights''.}.

\subsection{Correctness and Analysis}

The only difference between the last two algorithms is that, in one,
the distances are stored in the stack while, in the other, they get
computed on the fly from the vertex weights. We argue the correctness
of the last, simpler one but all the arguments work as well for the other.

\begin{theorem}
At the end of the algorithm, $R$ contains exactly the tight pairs in $G$
for threshold~$\g$ where the first element of the pair is the root vertex.
\end{theorem}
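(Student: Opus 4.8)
The plan is to prove the two inclusions separately, reusing the logic of the correctness proof of Algorithm~\ref{a:tightpaths} but exploiting acyclicity and Lemma~\ref{l:acyclicpathsum} to dispense with the bookkeeping tree. The central observation is that, because weights strictly increase along every edge, any path from the root to a vertex $v$ has cost exactly $w(v) - w(\mathrm{root})$; hence the single test $w(u) - w(\mathrm{root}) \leq \g$ faithfully decides whether the (unique-cost) path reaching $u$ stays within the budget, and the seen-set $S$ can safely replace the tree of paths. I would first record the invariant that a vertex $v$ is pushed onto the stack --- and therefore eventually popped and placed in $S$ --- if and only if $v$ is reachable from the root by a path of cost at most $\g$, equivalently $w(v) - w(\mathrm{root}) \leq \g$.

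For the completeness direction (every tight pair is reported), I would argue by induction on path length that each reachable, within-budget vertex is indeed popped: the root is pushed initially, and whenever a vertex $x_i$ on a root-to-$v$ path is popped, its successor $x_{i+1}$ passes the test $w(x_{i+1}) - w(\mathrm{root}) \leq \g$ --- because weight monotonicity forces $w(x_{i+1}) \leq w(v)$, so every prefix of the path is also within budget --- and so $x_{i+1}$ is pushed (or was already in $S$) and hence eventually popped. Given a tight pair $(\mathrm{root}, v)$, its endpoint $v$ is then popped; at that moment tightness at the $v$-side means every successor $u$ satisfies $w(u) - w(\mathrm{root}) > \g$, so \emph{mayextend} remains \emph{False}, while tightness at the root-side --- where the cheapest predecessor edge has cost $d_0$, so the smallest possible backward extension costs $d_0 + (w(v) - w(\mathrm{root}))$ --- is exactly the condition $d_0 + w(v) - w(\mathrm{root}) > \g$; both guards hold, so $(\mathrm{root}, v)$ is added to $R$. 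For the soundness direction (every reported pair is tight), I would run the same equivalences backwards: if $(\mathrm{root}, v)$ is added, then $v$ was popped, so by the invariant there is a path from the root to $v$ of cost $w(v) - w(\mathrm{root}) \leq \g$; the failure of \emph{mayextend} says no successor of $v$ yields a within-budget forward extension, and the guard $d_0 + w(v) - w(\mathrm{root}) > \g$ says no predecessor of the root yields a within-budget backward extension, since $d_0$ is the minimum predecessor-edge cost. These are precisely the two conditions of Definition~\ref{d:tightpath}, so the pair is tight.

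I expect the main obstacle to be justifying that the seen-set shortcut does not corrupt the tightness test --- the very subtlety the \emph{mayextend} flag is introduced to handle. Concretely, one must check that \emph{mayextend} is computed from the bound test alone, inspecting \emph{all} successors irrespective of their membership in $S$; this is what prevents a path from being wrongly declared tight merely because each legitimate forward extension happens to lead to an already-visited vertex (in which case nothing is pushed, yet the path is genuinely extendable). A secondary point is that a vertex may be pushed and popped more than once; this is harmless because $R$ is a set and the decision to emit $(\mathrm{root}, v)$ depends only on the fixed quantities $d_0$, $w(v)$, $w(\mathrm{root})$, $\g$ and on the successor set of $v$, none of which change between pops.
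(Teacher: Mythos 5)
Your proposal is correct and follows essentially the same route as the paper's proof: both reduce path costs to endpoint weight differences via Lemma~\ref{l:acyclicpathsum}, invoke DFS reachability to guarantee every within-budget vertex is eventually popped, and then match the two tightness conditions of Definition~\ref{d:tightpath} to the \emph{mayextend} flag and the $d_0$ guard respectively. You merely spell out in more detail what the paper compresses into ``the standard argument according to which DFS visits all reachable vertices,'' and your remarks on the seen-set subtlety and on repeated pops are correct refinements of the same argument rather than a different one.
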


\begin{proof}
Suppose that a tight path exists in $G$ starting at the root vertex,
ending at vertex $v$ and, by Lemma~\ref{l:acyclicpathsum}, with cost 
$w(v) - w(\mathrm{root}) \leq \g$. By the standard argument according
to which DFS visits all reachable vertices, $v$ is visited at some
point: there, the path being tight, ``mayextend'' remains false and
$d_0$ suffices to add up to beyond $\g$, so that the pair is added to $R$.

Conversely, any pair added to $R$ with $v$ as second component is
added upon finding $v$ in the stack which implies that, first, it 
is reachable from the root vertex and, second, that 
$w(v) - w(\mathrm{root}) \leq \g$ because both conditions are
necessary in order to entering the stack. Moreover, at the time
it is added to $R$ the Boolean flag is still false, so that no 
extension beyond $v$ remained below the threshold, and also the
addition of a predecessor of the root would surpass the threshold.
Hence, it is a tight pair.
\end{proof}

As of cost, clearly the comparisons made inside the loops take 
constant time so that the depth-first search runs in time
$O(|V| + |E|)$; as the algorithm is run with each vertex
as root in turn, the total time is $O(|V|^2 + |V||E|)$.

% \jlb{Remove a newpage command here, see where algs fall}
% \newpage

\section{Applications and discussion}

We have stated the problem of finding tight paths in graphs
as well as a particular case where the path costs only depend 
on the endpoints. The application that motivated considering
this problem actually fits the particular case, but we consider
the general problem of independent interest.

\subsection{Basic antecedents}
\label{ss:ba}

As we have hinted at above, this particular case is motivated by a 
data analysis application. There, we have a fixed, finite set of items 
and a transactional dataset, which is simply a finite sequence of sets 
of items. Out of the dataset, certain sets of items, called ``closures'', 
are identified. It is known that the subset ordering structures them into
a lattice; see \cite{GanWil99}. 
The graph is the so-called Hasse graph of the lattice, while 
vertex weights correspond to so-called ``support'' measurements.
Depending on the dataset and on often applied bounds on those measurements, 
the graph size may be anything from very small to exponential in the
number of items.

% Add some additional reference from US, e. g. Zaki? 

By comparison with the acyclic graph setting of Section~\ref{s:acyclic}, 
in this application there is a slightly relevant difference, namely,
the labels of interest for the edges are ``confidences'', 
that is, quotients of the supports found as vertex labels, 
instead of differences. Both support and confidence are mere frequentist 
approximations to a probability (a conditional probability in the 
case of confidence), and the case can be reduced to our acyclic
graph case by, simply, log-scaling the weights in the vertices (and of 
course the threshold) so that the quotients take the form of 
differences. The basis of the logarithm is irrelevant, but a
practical warning is in order: comparing the float-valued 
weights requires care since there is a risk of precision-related
bad evaluations of equalities and inequalities.

Alternatively, we can leave weights and threshold unscaled and 
replace appropriately additions and differences by multiplications 
and quotients in the algorithms so as to obtain the desired tight pairs. 
Quotients are, then, such that we obtain probabilities in $[0, 1]$ 
(that is, we set the smaller endpoint weight as numerator) and
the tight pairs $(u,v)$ to be found are now to fulfill: $w(v)/w(u) \geq \g'$ 
but $w(v')/w(u) < \g'$ and $w(v)/w(u') < \g'$ for every predecessor $u'$ 
of $u$ and every successor $v'$ of $v$; here $\g'$ is the original 
conditional probability threshold which, scaled appropriately, leads 
to the $\g$ of the tight pairs problem. Inequalities become
reversed due to having the smaller value in the numerator instead
of the larger one: upon log-scaling, this also supposes a change of
sign so that weights are positive.
Casted in terms of confidence and partial implications, 
these conditions define the ``basic antecedents'' 
of~\cite{Bal2010LMCS}, appropriate to construct a basis of
partial implications. The closely related basis of ``representative
rules'' \cite{KryszIDA} is also connected to basic antecedents. 
If the vertex weights are integers, as in the 
supports case, this alternative additionally avoids logarithms and 
floats in implementations that may incur in risks of precision 
mismatches.
All these strategies allow us to apply tight pair algorithms 
in order to find basic antecedents with ease.
The already repeatedly mentioned open source repository 
that contains our programs implementing the algorithms 
in this paper provides as well examples of graphs obtained 
from datasets; there, the outcome of the tight pair algorithms 
is indeed the ``basic antecedent'' relation. Some of these
are used as examples in the next section.

\subsection{Comparing runtimes}

For the particular case of Section~\ref{s:acyclic},
we have now four algorithmic possibilities. 
We illustrate their comparative efficiencies 
in practice with some examples next.

\looseness = 1
The dataset NOW~\cite{NOW}\footnote{Initially Neogene of the Old World, 
see also: \url{https://en.wikipedia.org/wiki/Neogene_of_the_Old_World},
rebranded as a backronym New and Old World when American
sites were joined in.} 
contains information regarding paleontological excavation sites, 
including lists of Cenozoic land mammal species whose remains have been 
identified in each site, plus geographical coordinates 
and other data; four taxa per species are available 
(Order, Family, Genus and Species).
For our example, that we will call ``simplified NOW dataset'',
we keep just the Order taxon and consider a transactional dataset
where each transaction corresponds to an excavation site and contains
only the orders of the species whose remains have been found in the site.
From the data, a Hasse graph of the lattice of closed sets can be constructed, 
as mentioned in the previous Subsection~\ref{ss:ba}; the graph contains 
348 vertices.

% other references? Fortelius, Indrė Žliobaitė

\begin{figure}
    \centering
    \includegraphics[width=0.70\linewidth]{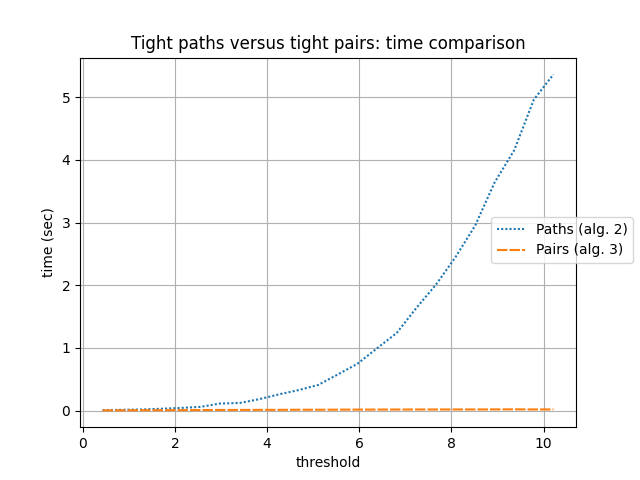}
    \caption{Comparison of running times for 
Algorithms \ref{a:tightpaths} and \ref{a:tightpairsnaive} 
on the Hasse graph of the closure space of the simplified NOW dataset.}
    \label{fig:timing23}
\end{figure}

We can see very clearly that, on this graph, Algorithm~\ref{a:tightpaths} is most often slower than 
Algorithm~\ref{a:tightpairsnaive}. Figure~\ref{fig:timing23}
shows (a piecewise linear interpolation of) the time required by each 
on 25 different, equally spaced threshold values. The reason of the
growing difference is as follows: as the threshold grows, on one hand
new pairs appear but, also, some appearing pairs replace pairs 
that are not tight anymore; more precisely, for thresholds 
below 9 there are less than 400 tight pairs and, beyond, the quantity
decreases further to below 200. However, as pairs are further 
and further apart, more and more different paths do connect them, 
so that the number of paths grows up to thousands at around 
threshold 1.5, then tens of thousands for thresholds beyond 4, 
and hundreds of thousands beyond threshold 8. 
Since Algorithm~\ref{a:tightpaths} explores all paths, 
its running time grows.

\begin{figure}
    \centering
    \includegraphics[width=0.70\linewidth]{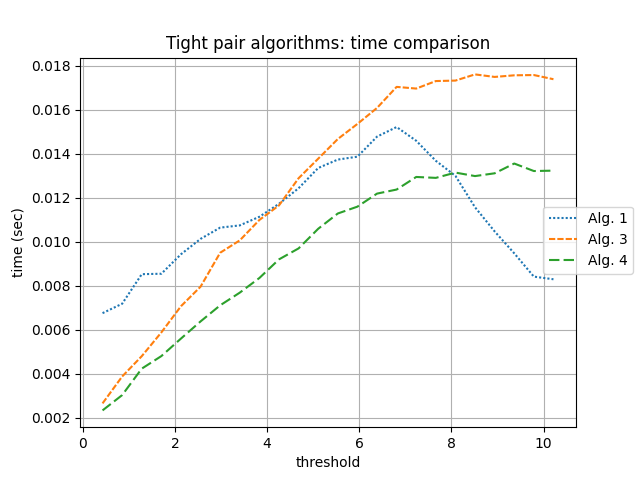}
    \caption{Comparison of running times for 
Algorithms \ref{a:slatt}, 
\ref{a:tightpairsnaive} and 
\ref{a:tightpairs} 
on the Hasse graph of the closure space of the simplified NOW dataset.}
    \label{fig:timing134}
\end{figure}

We believe that these considerations will remain applicable
to most input graphs.
Therefore, we recommend using Algorithm~\ref{a:tightpaths}
only for the general case and resort to one of the others 
for all cases where the conditions in Section~\ref{s:acyclic}
apply, such as the case of the simplified NOW dataset.
For this graph again, 
we see the same sort of graphic for the running times on
the same 25 threshold values in Figure~\ref{fig:timing134}.
Here the time is averaged over 5 separate, independent runs.
All times are below one-fiftieth of a second.
We see that the correspondence-based algorithm is only slightly
better on the higher end where the number of tight pairs
decreases whereas the number of paths connecting them grows.
We also see that the simplifications made to obtain 
Algorithm~\ref{a:tightpairs} do give it an edge. 
Additional tests suggest that the correspondence-based
implementation is faster on small graphs, probably due to the 
fast traversal of very small
dictionaries. Finally, Figure~\ref{fig:timing134m} shows
running times for a larger graph consisting of 2769 vertices,
also a Hasse graph but starting from a different transactional
dataset related to so-called ``market basket data''.
There, DFS-based algorithms clearly win at all thresholds 
explored, although the correspondence-based algorithm 
again improves its running times as the number of pairs starts
to decrease. Again the figure is obtained by averaging 5 runs 
on each of 25 equally spaced threshold values.

\begin{figure}
    \centering
    \includegraphics[width=0.70\linewidth]{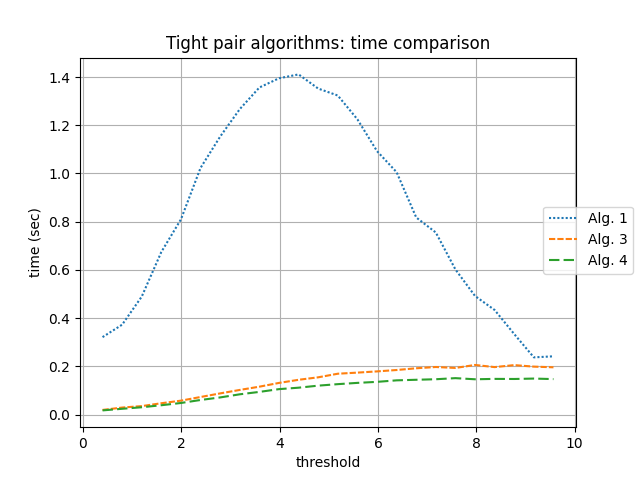}
    \caption{Comparison of running times for 
Algorithms \ref{a:slatt}, 
\ref{a:tightpairsnaive} and 
\ref{a:tightpairs} 
on the Hasse graph of the closure space of 
a ``market-basket style'' dataset.}
    \label{fig:timing134m}
\end{figure}

\subsection{Future work} % {Negative costs}

Our Algorithm~\ref{a:tightpaths} is a first solution to the
general problem stated in this paper. The rest of our currently
completed work limits its scope to the specific variant motivated 
by our applications, and it is conceivable that better algorithms 
may be found for the general case.

For graphs where some edges have a negative cost, the first issue to
warn about is that the proof of Proposition~\ref{p:extendpath}
is not applicable anymore and, therefore, one should
clarify which one of the two equivalent statements of
that proposition is being considered for extension
into negative weights. Algorithm~\ref{a:tightpaths} 
is able to handle certain cases of negative costs
but in other cases it will loop forever.
The task of working out all the details and properties
of tight pairs in the presence of negative costs remains
as a topic for future research but, as a motivating 
question, we dare to state a specific conjecture: 
we currently believe that Algorithm~\ref{a:tightpaths}, 
applied as is to a graph with some negative or zero weights, 
will not finish if and only if there is in the graph a cycle 
of nonpositive total weight.

% \begin{lemma}
% \label{l:}
% \end{lemma}

% \begin{theorem}
% \end{theorem}

% \begin{proof}
% \end{proof}

% \begin{proposition}
% \end{proposition}

% \BlankLine
% \While{\rm }{
%   \For{\rm }{
%     \If{ }{
%       }
%     }
% \If{\rm }{  
%   }
% }
% \BlankLine

\bibliographystyle{plain}
% \bibliography{../../../b}
\bibliography{b}

\end{document}